\newcommand{\model}{{FairExp}}
\def \xijt {{\mathbf{x}_{ij}^t}}
\def \xmns {{\mathbf{x}_{mn}^s}}
\def \ymns {{\mathbf{y}_{mn}^s}}
\def \btheta {\boldsymbol{\theta}}
\def \bigO {\mathcal{O}}
\begin{document}

\title{Calibrating Explore-Exploit Trade-off for Fair Online Learning to Rank}

\author{\name Yiling Jia \email yj9xs@virginia.edu \\
	\addr Department of Computer Science\\
	University of Virginia\\
	Charlottesville, VA 22903, USA
	\AND
	\name Hongning Wang \email hw5x@virginia.edu\\
	\addr Department of Computer Science\\
	University of Virginia\\
	Charlottesville, VA 22903, USA
}

\maketitle

\begin{abstract}

Online learning to rank (OL2R) has attracted great research interests in recent years, thanks to its advantages in avoiding expensive relevance labeling as required in offline supervised ranking model learning. 
Such a solution explores the unknowns (e.g., intentionally present selected results on top positions) to improve its relevance estimation. This however triggers concerns on its ranking fairness: different groups of items might receive differential treatments during the course of OL2R. 
But existing fair ranking solutions usually require the knowledge of result relevance or a performing ranker beforehand, which contradicts with the setting of OL2R and thus cannot be directly applied to guarantee fairness.

In this work, we propose a general framework to achieve fairness defined by group exposure in OL2R. 
The key idea is to calibrate exploration and exploitation for fairness control, relevance learning and online ranking quality.
In particular, when the model is exploring a set of results for relevance feedback, we confine the exploration within a subset of random permutations, where fairness across groups is maintained while the feedback is still unbiased. Theoretically we prove such a strategy introduces minimum distortion in OL2R's regret to obtain fairness.
Extensive empirical analysis is performed on two public learning to rank benchmark datasets to demonstrate the effectiveness of the proposed solution compared to existing fair OL2R solutions.
\end{abstract}
\keywords{group fairness, online learning to rank, fair ranking}

\maketitle

\section{Introduction}
Result ranking is at the core of many information retrieval applications, such as document retrieval, online advertisement, and recommendations. To quickly capture users' information need and avoid expensive relevance labeling as required in offline solutions, online learning to rank (OL2R) \citep{yue2009interactively,wang2019variance,jia2021pairrank} has recently attracted great research interests. OL2R directly learns from implicit user feedback, such as clicks. It empowers modern retrieval systems to optimize their performance directly from users' interactions and makes the supervised learning possible when collecting explicit annotations from experts is economically infeasible.

Existing OL2R solutions target at user-focused utility optimization, which unfortunately ignores the impact of result ranking on the content providers, who will receive differential attention from the users depending on their results' ranked positions. More specifically, it is widely known that the position of an item in the ranked list has a crucial influence on its exposure and chance to be consumed by the users, e.g., position bias \citep{joachims2005accurately,agichtein2006improving,guan2007eye}. Even a minor difference in relevance can translate into huge discrepancy in exposure across groups and thus societal or economic impact \citep{biega2018equity, singh2018fairness}. For example, job postings ranked on top positions of LinkedIn's search result pages are more likely to be examined and considered by most applicants; as a result, those employers can gain an edge in the business competitions \citep{geyik2019fairness}. 

Fairness does not only pertain to OL2R; instead, any ranking solutions (online or offline) should concern this important aspect. But OL2R brings in unique new challenges for fair result ranking: comparing to the scenario where a pre-trained ranker is deployed, the differential treatment can be accumulated and amplified in a faster rate during the course of OL2R, since the algorithm has to continuously place an intentionally selected subset of results on top. This is driven by the need of exploration in OL2R.
Though various exploration strategies have been proposed in OL2R literature, including those exploring in model space \citep{yue2009interactively, hofmann2012estimating, zhao2016constructing, wang2018efficient, wang2019variance} and action space \citep{katariya2016dcm,zoghi2017online, lattimore2018toprank, li2018online, kveton2018bubblerank,oosterhuis2018differentiable, jia2021pairrank}, none of them consider fairness when presenting results to users. When the exploration is deterministic, e.g., confidence interval based methods \citep{li2018online,jia2021pairrank,lattimore2018toprank}, the situation can become even worse, especially when the confidence estimation is correlated with protected attributes of results. 
This introduces a new conflict in the already complicated explore-exploit dilemma in OL2R.
For instance, to ensure fairness, some results cannot be displayed, which directly slows down or even prevents online model learning. The slow improving relevance estimation in turn can lead to poor user experience (i.e., higher regret); in the meanwhile, if the bias from earlier OL2R update cannot be quickly eliminated by new feedback, unfair result ranking will be accumulated and amplified. This leads to a new paradox among three elements: fairness, exploration and exploitation. 

Most of the existing fair ranking solutions unfortunately do not apply to OL2R, because they require expert-labeled relevance data or logged feedback for all items to train a ranker beforehand. 
For example, in-processing methods impose fairness at training time as an additional optimization objective \citep{zehlike2020reducing, singh2019policy, yadav2021policy}; while post-processing approaches re-rank the result subject to predefined fairness constraints \citep{celis2017ranking, geyik2019fairness, zehlike2017fa}. Their prerequisite on the availability of relevance labels restricts these solutions in the OL2R setting, where the ranker is trained via the interactions with the users on the fly. Recently, a merit-based fair ranking solution is proposed for dynamic learning to rank~\citep{morik2020controlling}, where inverse propensity scoring is applied to estimate the relevance score and a proportional controller is used to mitigate unfairness in result ranking on the fly. But its fairness control part does not consider the uncertainty of the estimated relevance, which can still lead to unfair ranking before the online estimation converges. 

We propose a general framework to achieve fairness defined on group exposure during OL2R. 
The key idea is to calibrate the trade-off between exploration and exploitation under fairness constraints. In particular, when the model is exploring a set of results for relevance feedback, the exploration is confined within a subset of random permutations, where fairness across groups is maintained while the feedback is still unbiased. When the model is exploiting, fairness is directly enforced. 
More specifically, in each round of result serving, a set of group-level placement templates for the top-$k$ positions will first be constructed based on the currently accumulated unfairness. Based on the expected exposure of each rank position under the current query, all templates need to maintain unfairness under a threshold. As fairness on group exposure only concerns the placement of each group (i.e., their ranking order), the construction of such templates is independent from ranker's relevance estimation. Then, the ranked list will be generated with respect to the chosen template and the required exploration by the OL2R algorithm. When multiple templates are qualified, the one with the projected minimum OL2R regret will be chosen (i.e., exploitation). 
In other words, fairness is treated as a hard constraint when striking the balance between exploration and exploitation.

In this paper, we cast a state-of-the-art OL2R algorithm, PairRank \cite{jia2021pairrank}, into a fair online ranking solution. We choose PairRank is only because of its promising empirical performance and feasibility for theoretical analysis. Our proposed framework is general and can be applied to OL2R solutions with action space exploration, as long as their exploration is deterministic (i.e., where to explore can be explicitly computed). And a typical family of such solutions are the confidence interval based methods, such as TopRank \cite{lattimore2018toprank}, RecurRank \cite{li2018online} and PairRank. 
Extensive empirical analysis of the resulting algorithm is performed on two public learning to rank datasets to demonstrate the effectiveness and advantages of our proposed framework comparing to existing fair OL2R solutions.

\section{Related Work}
\label{sec_related}
\noindent\textbf{Online learning to rank.}
Based on the different exploration strategies developed in OL2R literature, existing OL2R solutions can be broadly categorized into two main classes: action space exploration and model space exploration.
The first type deliberately controls result ranking at each round to collect unbiased relevance feedback for model update. Due to the combinatorial nature of ranking, different decompositions are proposed to reduce the exponentially large action space.
One category of such solutions find the best ranked list for each individual query separately, by modeling users' click and examination behaviors with multi-armed bandit algorithms \citep{radlinski2008learning,kveton2015cascading,zoghi2017online,lattimore2018toprank}. The exploration is then reduced to every rank position under each query. To generalize the learning across queries, another category of such methods employ parametric models for relevance estimation. For example, PDGD \citep{oosterhuis2018differentiable} samples the next ranked document from a Plackett-Luce model and estimates gradients from the inferred pairwise result preferences. PairRank \citep{jia2021pairrank} explores in the pairwise document ranking space for learning a pairwise logistic regression ranker online.

The second type of OL2R solutions explore in the entire model space to locate the best ranker \citep{yue2009interactively,li2018online,oosterhuis2018differentiable}. The most representative work is Dueling Bandit Gradient Descent (DBGD) \citep{yue2009interactively,schuth2014multileaved}, which uniformly explores in the entire model space based on interleaved feedback. Subsequent methods improved DBGD by developing more efficient sampling strategies, such as multiple interleaving and projected gradient, to reduce variance \citep{hofmann2012estimating,zhao2016constructing,oosterhuis2017balancing, wang2018efficient, wang2019variance}. However, as fairness is not considered in any of those aforementioned OL2R solutions at all, there is no guarantee for fair result ranking, even when some randomization is employed. In this work, we empirically verified this in our experiments.

\noindent\textbf{Fair ranking.}
Fairness in ranking is concerned with an insufficient presence or a consistently differential treatment over different groups across all ranking positions~\citep{castillo2019fairness}. Various definitions of fairness in ranking have been proposed, together with their fair ranking solutions~\citep{zehlike2021fairness}. Most metrics are motivated by those defined in fair classification problems~\citep{mehrabi2021survey}. Specifically, solutions focusing on group parity enforces a proportional allocation of exposure between groups. For example, \citet{yang2017measuring} proposed to reduce the difference in occurrences of different groups on a subset of ranking positions. And a series of work proposed to set a limit on the number of items from each group in the top-$k$ positions \citep{celis2017ranking, geyik2019fairness, zehlike2017fa}. On the other hand, merit-based fairness of exposure allocates exposure to groups based on their merit instead of the group size. Both in-processing \citep{zehlike2020reducing, singh2019policy, yadav2021policy} and post-processing approaches \citep{biega2018equity, singh2018fairness} are proposed to achieve this type of fairness in ranking. However, all the methods assume either the expert-labeled relevance or logged implicit feedback are available for model training or result ranking beforehand. This unfortunately prevents the application of such fair ranking solutions in the online setting.

The most relevant work to ours is the FairCo algorithm proposed in \citep{morik2020controlling}, which applies proportional control to mitigate unfairness in dynamic learning to rank.
Its online ranker update is achieved by the inverse propensity scoring method on user clicks. This however imposes a strong assumption on the user examination behavior (i.e., position-based examination). And when enforcing fairness, uncertainty in the ranker's relevance estimation is not considered. As a result, inaccurate fairness measure might lead to unfair control in this solution. Additionally, the provided theoretical analysis is limited to a fixed set of items, which cannot be applied to the unseen queries and items.
Some recent work studied fairness under inference uncertainty but in somehow different settings. \citet{ghosh2021fair} studied the impact from the uncertainty in sensitive attributes on fair ranking. And \citet{singh2021fairness} focused on the uncertainty in the estimated merit and individual fairness in the ranking problem. In this work, we study group fairness defined by the exposure on the item side in OL2R, where we assume the grouping of items is given and new queries and items can emerge at any time.

\section{Problem Formulation}
In this section, we provide a brief overview about the general problem setting in OL2R, especially about its key challenges, and the notion of fairness in ranking employed in this paper.

\subsection{Online Learning to Rank}

In OL2R, at round $t \in [T]$, the ranker receives a query $q_t$ and its associated $L_t$ candidate documents\footnote{We will use ``document'' and ``item'' interchangeably in this paper to denote the instances to be ranked under a query.} represented by a set of $d$-dimensional query-document feature vectors: $\mathcal{X}_t = \{\xb_1^t, ..., \xb_{L_t}^t\}$ where $\xb^t_i \in \RR^d$. 
The ranking $\pi_t = \big(\pi_t(1), ..., \pi_t(L_t)\big) \in \Pi([L_t])$ is generated by the ranker based on its knowledge so far and its exploration strategy, where $\Pi([L_t])$ represents the set of all permutations of $L_t$ documents and $\pi_t(i)$ is the rank position of document $i$. The user examines the returned ranked list and provides his/her feedback, i.e., clicks $C_t = \{c_1^t, c_2^t, ..., c_{L_t}^t\}$, where $c_i^t = 1$ if the user clicked on document $i$ at round $t$; otherwise $c_i^t = 0$. According to the collected implicit feedback, the ranker updates itself for improved ranking in the next round.

As the ranker learns from user feedback while serving, cumulative regret is an important metric for evaluating an OL2R solution. Usually regret is defined as the cumulative expected difference between the utility of the presented ranked list $\pi_t$ and that of the ideal one $\pi_t^*$ over a finite time horizon $T$, i.e., 
\begin{equation}
\label{eq_regret_def}
    R_T = \mathbb{E}\left[\sum\nolimits_{t=1}^T r_t\right] = \mathbb{E} \left[\sum\nolimits_{t=1}^T f(\pi_t^*) - f(\pi_t)\right],
\end{equation}
where $f(\cdot)$ represents a chosen utility function. $f(\cdot)$ can be set as the number of clicks \citep{kveton2015cascading,lattimore2018toprank} or the number of correctly ordered pairs of documents \citep{jia2021pairrank}, based on the application context of OL2R. 

We have to emphasize that OL2R is \emph{\textbf{not}} literally just learning a ranking model online, such as using online gradient descent over a parametric model. The key challenge is the ranker only learns from the feedback collected in its presented rankings -- the so-called bandit feedback \citep{lattimore2020bandit}. For the results not presented to the users (i.e., presentation bias) or those presented at lower positions (i.e., position bias), the ranker has little knowledge about their relevance. As a result, the ranker has to repeatedly present the currently underestimated results on top for more feedback, which however contradicts its need to present the currently best ranking to reduce instantaneous regret. This is the well-known explore-exploit dilemma in OL2R. And the extremely large ranking space, i.e., exponentially sized possible rankings, further amplifies the difficulty in OL2R.
Efficient exploration is thus needed to shrink the search space so as to quickly find satisfying result rankings to reduce regret. As discussed in Section \ref{sec_related}, various exploration strategies have been proposed in literature, with their own advantages and limitations. Our work in this paper is not to develop yet another OL2R solution, but to provide a general framework that ensures ranking fairness in a family of existing OL2R solutions. 

\subsection{Fairness in OL2R}
It is asserted by the probability ranking principle that user-side utility reaches its optimal when the documents are ranked by the expected values of their estimated relevance (merit) to the user \citep{robertson1977probability}. However, it unfortunately ignores the differential treatment that content providers would receive from the document side in such rankings.  
The key resource that a ranking system allocates among the documents is exposure \citep{singh2018fairness}, which influences the probability that the documents to be examined by the users and consequentially the social and/or economic benefit the content provider will receive. 
Without loss of generality, in this work, we assume all ranking candidates among all queries over time belong to two groups ($G_A$ and $G_B$). We define the instantaneous unfairness resulted in a particular ranked list as the difference between the exposure received by group $G_A$ and group $G_B$ from the presented ranking,
\begin{align*}
    UF(G_A, G_B, \beta) = Exposure(G_A) - \beta Exposure(G_B),
\end{align*}
where $\beta$ is an unfairness coefficient that controls the relative exposure that group $G_A$ and $G_B$ should receive. $\beta$ is an hyper-parameter chosen by the system designer to weigh the discrepancy in exposure between the two groups based on the need of specific applications. The definition of exposure is also application-specific: it can be quantified as the examination probability \citep{morik2020controlling} or the clicks/dwelling time received by the ranked documents. In this work, we adopt the classical choice of examination probability to simplify our discussion, but our solution can be generalized to other definitions with specific instrument to measure/infer the corresponding exposure.

To capture the difference of the cumulative treatment received during the course of OL2R up to time $t$, we consider a ranking system fair, if for any $t\in[T]$, we have
\begin{align}
\label{eqn:unfairness}
    UF_t = \left|\sum\nolimits_{s=1}^t\Big(Exposure_s(G_A) - \beta Exposure_s(G_B)\Big)\right|\le\epsilon,
\end{align}
where $\epsilon>0$ is a threshold chosen by the system designer to satisfy task-specific requirement of fairness. 

Our definition of unfairness is general to cover most of the existing exposure-based ranking fairness metrics. First, for demographic parity constraint proposed in \citep{singh2018fairness} which enforces that the average exposure of groups to be equal, $\beta$ can be set as the expected ratio of the group sizes between $G_A$ and $G_B$; while for fairness concerning the disparate treatment \citep{singh2018fairness, morik2020controlling}, of which the exposure received by each group should be proportional to their relevance or merit, $\beta$ can be set as the ratio of expected utility between groups. Besides, our unfairness definition is not limited to only two groups. Different pairs of groups can have different $\beta$ to control the relative exposure as long as the $\beta$s satisfy the transitivity constraint.

\section{Method}

In this section,  we present our proposed framework for fair OL2R. The key idea is to calibrate the trade-off between exploration and exploitation under fairness constraints. We choose PairRank \citep{jia2021pairrank} as the basic OL2R model because of its encouraging empirical performance and feasibility for  theoretical analysis. We use PairRank to illustrate our fair OL2R framework, and discuss how to generalize it with other OL2R algorithms that perform deterministic exploration in the action space. 

\subsection{Exploration and Exploitation in PairRank}
PairRank learns a pairwise logistic regression ranker on the fly. It scores a candidate ranking document $\xb$ by $f_{\btheta}(\xb)=\btheta^\top\bx$, in which the ranker's parameter $\btheta_t$ at round $t$ is obtained by optimizing the cross-entropy loss between the predicted pairwise relevance distribution on all documents and those inferred from user feedback till round $t$:
\begin{align}
\label{eqn:loss}
    \mathcal{L}_t (\btheta) = \sum\nolimits_{s=1}^t\sum\nolimits_{(m, n) \in D_s}  - (1 - \ymns)\log\big(1 - \sigma({\xmns}^\top\btheta)\big)   
    - \ymns \log\big(\sigma({\xmns}^\top\btheta)\big) + \frac{\lambda}{2} \Vert\btheta\Vert^2 
\end{align}
where $\xmns = \xb_m^s - \xb_n^s$, $D_s$ denotes the training set inferred from the click feedback at round $s$, $\ymns$ indicates the observed pairwise click preference between document $m$ and $n$, $\sigma(\cdot)$ is the sigmoid link function used in logistic regression, and $\lambda$ is the L2 regularization coefficient.

Due to click noise, there is uncertainty in the estimated parameter, i.e., $\Vert{\btheta}_t - \btheta^*\Vert \neq 0$, assuming $\btheta^*$ is the underlying ground-truth model parameter. Under the assumption that the pairwise click noise follows an $R$-sub-distribution (basically with finite variance), the uncertainty of PairRank's predicted pairwise rank order can be quantified and upper bounded with a high probability by the following lemma.

\begin{lemma} [Lemma 1 in \citep{jia2021pairrank}](Confidence Interval of Pairwise Rank Order Prediction in PairRank). At round $t < T$, for any pair of documents $\bx_i^t$ and $\bx_j^t$ under query $q_t$, with probability at least $1 - \delta_1$, we have,
\begin{equation*}
    |\sigma({\xijt}^\top {\btheta}_t) - \sigma({\xijt}^\top \btheta^*) | \leq \alpha_t\Vert\xijt\Vert_{\mathbf{M}_t^{-1}},
\end{equation*}
where $\alpha_t = ({2k_{\mu}}/{c_{\mu}}) \Big(\sqrt{R^2\log{({\det(\mathbf{M}_t)}/({\delta_1^2 \det(\lambda \mathbf{I})})})} + \sqrt{\lambda} Q\Big)$, $\mathbf{M}_t = \lambda \mathbf{I} + \sum_{s =1}^{t-1}\sum_{(m, n) \in D_s}\xmns\xmns^\top$, $k_{\mu}$ and $c_{\mu}$ are the Lipschitz constants of the sigmoid link function $\sigma(\cdot)$, $R$ is the sub-Gaussian parameter for the click noise, and $Q$ is the upper bound of the model parameter's norm, i.e., $\Vert\btheta\Vert \leq Q$.
\label{lemma:cb}
\end{lemma}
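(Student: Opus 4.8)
The plan is to reduce the gap between the two sigmoid predictions to a bound on the estimation error of $\btheta_t$ measured in the data-dependent norm $\Vert\cdot\Vert_{\mathbf{M}_t}$, and then to control that error with a self-normalized concentration inequality. First I would exploit that $\sigma(\cdot)$ is $k_\mu$-Lipschitz to pass from the nonlinear quantity to a linear one,
\begin{equation*}
|\sigma({\xijt}^\top\btheta_t) - \sigma({\xijt}^\top\btheta^*)| \le k_\mu\,|{\xijt}^\top(\btheta_t - \btheta^*)|,
\end{equation*}
and then apply Cauchy--Schwarz with respect to $\mathbf{M}_t$ to factor the feature vector from the parameter error:
\begin{equation*}
|{\xijt}^\top(\btheta_t-\btheta^*)| \le \Vert\xijt\Vert_{\mathbf{M}_t^{-1}}\,\Vert\btheta_t-\btheta^*\Vert_{\mathbf{M}_t}.
\end{equation*}
This isolates $\Vert\xijt\Vert_{\mathbf{M}_t^{-1}}$, exactly the factor appearing on the right-hand side of the claim, so everything reduces to showing $\Vert\btheta_t-\btheta^*\Vert_{\mathbf{M}_t}\le\alpha_t/k_\mu$ with the stated probability.

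Next I would turn to the estimation error. Since $\btheta_t$ minimizes the strictly convex regularized logistic loss $\mathcal{L}_t$, the first-order condition $\nabla\mathcal{L}_t(\btheta_t)=0$ reads $\sum_{s,(m,n)}\big(\sigma({\xmns}^\top\btheta_t)-\ymns\big)\xmns + \lambda\btheta_t = 0$. Writing the mean-zero pairwise noise as $\epsilon_{mn}^s = \ymns - \sigma({\xmns}^\top\btheta^*)$ and applying the mean value theorem to the map $\btheta\mapsto\sum_{s,(m,n)}\sigma({\xmns}^\top\btheta)\xmns$, I obtain a positive-definite matrix $\mathbf{G} = \lambda\mathbf{I} + \sum_{s,(m,n)}\dot\sigma({\xmns}^\top\bar\btheta)\,\xmns\xmns^\top$, for some $\bar\btheta$ on the segment between $\btheta_t$ and $\btheta^*$, such that
\begin{equation*}
\mathbf{G}(\btheta_t-\btheta^*) = \sum\nolimits_{s,(m,n)}\epsilon_{mn}^s\,\xmns - \lambda\btheta^*.
\end{equation*}
Because $\dot\sigma\ge c_\mu$ and $c_\mu\le 1$, this matrix obeys the curvature lower bound $\mathbf{G}\succeq c_\mu\mathbf{M}_t$. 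Pairing this with Cauchy--Schwarz on the right-hand side gives
\begin{equation*}
c_\mu\Vert\btheta_t-\btheta^*\Vert_{\mathbf{M}_t} \le \Big\Vert\sum\nolimits_{s,(m,n)}\epsilon_{mn}^s\xmns\Big\Vert_{\mathbf{M}_t^{-1}} + \lambda\Vert\btheta^*\Vert_{\mathbf{M}_t^{-1}},
\end{equation*}
and the regularization term is controlled crudely by $\lambda\Vert\btheta^*\Vert_{\mathbf{M}_t^{-1}}\le\sqrt\lambda\,\Vert\btheta^*\Vert\le\sqrt\lambda\,Q$ using $\mathbf{M}_t\succeq\lambda\mathbf{I}$ and the norm bound on $\btheta^*$.

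The hard part is bounding the self-normalized noise sum $\big\Vert\sum_{s,(m,n)}\epsilon_{mn}^s\xmns\big\Vert_{\mathbf{M}_t^{-1}}$. I would treat $\{\epsilon_{mn}^s\}$ as a martingale-difference sequence adapted to the filtration generated by the history up to each pairwise observation, and invoke the $R$-sub-Gaussian assumption together with the self-normalized tail inequality (the method-of-mixtures argument). This yields, with probability at least $1-\delta_1$,
\begin{equation*}
\Big\Vert\sum\nolimits_{s,(m,n)}\epsilon_{mn}^s\xmns\Big\Vert_{\mathbf{M}_t^{-1}} \le R\sqrt{\log\!\Big(\tfrac{\det(\mathbf{M}_t)}{\delta_1^2\det(\lambda\mathbf{I})}\Big)}.
\end{equation*}
Substituting this into the previous display, dividing by $c_\mu$, and chaining back through the Lipschitz and Cauchy--Schwarz steps produces the claimed interval: the factor $k_\mu$ comes from the Lipschitz step and $1/c_\mu$ from the curvature lower bound, with the remaining numerical constant absorbed into $\alpha_t$ from the exact form of the tail bound and the index alignment between $\mathcal{L}_t$ and $\mathbf{M}_t$. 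The subtle points I expect to verify are that the pairwise noise genuinely forms a martingale-difference sequence with respect to the chosen filtration and that the sub-Gaussian (finite-variance) condition holds uniformly, since the martingale structure is precisely what the whole concentration step relies on; the rest of the argument is routine once that is in place.
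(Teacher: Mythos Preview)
Your proposal is correct and follows the standard analysis for generalized linear bandits (the Filippi--Capp\'e--Garivier--Szepesv\'ari / Li--Lu--Zhou approach): Lipschitz reduction, Cauchy--Schwarz in the $\mathbf{M}_t$ geometry, the integral/mean-value linearization of the score equation to obtain $\mathbf{G}\succeq c_\mu\mathbf{M}_t$, and the Abbasi-Yadkori--P\'al--Szepesv\'ari self-normalized martingale bound for the noise term. This is precisely the argument behind the cited result.

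Note, however, that the present paper does \emph{not} actually prove this lemma: it is quoted verbatim as ``Lemma~1 in \citep{jia2021pairrank}'' and used as a black box, so there is no in-paper proof to compare against. Your sketch reconstructs the original PairRank proof, and there is nothing to contrast methodologically. Two small points worth tightening if you write this out in full: (i) the multivariate mean value theorem does not give a single intermediate point $\bar\btheta$; you should use the integral form $\mathbf{G}=\lambda\mathbf{I}+\sum_{s,(m,n)}\big(\int_0^1\dot\sigma({\xmns}^\top(\btheta^*+s(\btheta_t-\btheta^*)))\,ds\big)\xmns{\xmns}^\top$ to get a genuinely symmetric matrix dominating $c_\mu\mathbf{M}_t$; and (ii) the factor $2$ in the stated $\alpha_t$ is a slack constant from the original derivation, so your remark about absorbing numerical constants is exactly right.
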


The key step in PairRank is to separate the document pairs into two sets: $S_t^c$ and $S_t^u$. The set $S_t^c$ contains all pairs where the two documents' relative rank order is certain (i.e., with high probability the predicted rank order is correct), and the complementary set $S_t^u$ consists of the pairs with uncertain rank order (i.e., their predicted rank order can still be wrong). And these two sets can be easily constructed using the results in Lemma \ref{lemma:cb}. 

When ranking the results, PairRank partitions all the candidate documents into blocks (by $S_t^c$ and $S_t^u$) such that all the pairwise ranking orders across blocks are certain (and thus correct with high probability), but there is still uncertainty among documents within a block. Then the algorithm outputs each block in a descending order (to exploit), while randomizes the rankings within each block (to explore). This divide-and-conquer strategy effectively reduces the exponentially large exploration space to quadratic.
And \citet{jia2021pairrank} proved that with a high probability PairRank's regret is only incurred when the rank orders in $S_t^c$ is not followed. This directly leads to the $\bigO{(\log^4(T))}$ cumulative regret upper bound defined on the number of mis-ordered pairs over $T$ rounds of result serving. However, no treatment was provided in PairRank to ensure fairness; and there is no guarantee its randomization leads to group fairness. In the next section, we describe our general receipt to turn PairRank, and the algorithms of its type, into a fair OL2R solution. 

\subsection{Fair Online Learning to Rank}
We now present our fair OL2R framework \model{} in detail. The key idea behind \model{} is to construct a set of group-level placement templates that calibrate exploration and exploitation with respect to the fairness constraint defined by Eq \eqref{eqn:unfairness} in every round.

\subsubsection{Fair placement of groups.}
Group fairness concerns whether each group of documents receive fair treatment, e.g., the exposure received by documents belonging to group $G_A$ should be comparable to that received by documents belonging to group $G_B$. Therefore, when considering the top-$k$ ranking positions, the placement of groups, instead of the detailed placement of documents, affects the unfairness across groups. Such a group-level ranking space for fairness control is much smaller than the exponentially sized ranking space for individual items. 
Specifically, with a fixed value $k$, which is chosen by the system designer for each specific ranking task, the set of all possible combinations of group rankings $\Omega^k$ is fixed and can be constructed beforehand. For example, with two groups at top-$k$ positions, there are in total $2^k$ different group-level placements, assuming each group has at least $k$ candidate items under each query.

For a particular group-level placement template for the top-$k$ positions, its resulting unfairness can be projected based on the expected exposure at each position, before the user examines the realized ranking. For example, under the position-based examination model \citep{craswell2008experimental}, the examination probability of each item $i$ only depends on its position in the ranked list, e.g., $P(e(i)=1 | i, \pi) = P(\pi(i))$. Several techniques have been proposed to estimate such position-based examination probabilities in practice \citep{agarwal2019estimating, fang2019intervention, joachims2017unbiased}. Other more sophisticated examination models can also be employed for the purpose to account for more factors than just the rank positions, such as document content \citep{wang2013content} and query intent \citep{yin2014exploiting}.
Then the expected exposure each group will receive can be estimated accordingly beforehand; and such values can be indexed for efficient online access later. 
For example, at round $t$, for the group placement template $\omega = \{A, A, B, A, B\} \in \Omega^5$, we can compute the expected exposure for group $G_A$ and $G_B$ as follows, 
\begin{align*}
    Exposure_t^{\omega}(G_A) = P(1) + P(2) + P(4), \quad Exposure_t^{\omega}(G_B) = P(3) + P(5).
\end{align*}
According to the definition in Eq \eqref{eqn:unfairness}, with the unfairness coefficient $\beta$, the expected cumulative unfairness if template $\omega$ is chosen for round $t$ can be estimated as,
\begin{align}
\label{eq_example_unfairness}
    UF_t^{\omega}(G_A, G_B, \beta) = UF_{t-1}(G_A, G_B, \beta) +P(1) + P(2) + P(4)- \beta(P(3) + P(5)) .
\end{align}
This quantity can then be used to guide the selection of final group placement template for round $t$. But we should note only after user examines the results we can compute the realized unfairness $UF_t(G_A, G_B, \beta)$ based on the actual exposure in the displayed ranking $\pi_t$, which will then be used to guide fairness control in the next round. Under specific types of exposure, e.g., those defined by position-based examination, the expected exposure can be directly used to update $UF_t(G_A, G_B, \beta)$.

\subsubsection{Calibrate explore-exploit trade-off under fairness constraint.}
Based on the notion of projected unfairness of a particular group placement template, we design our \model{} framework to ensure fairness during OL2R. 
Specifically, at round $t$, we first select all templates from $\Omega^k$ that satisfy Eq \eqref{eqn:unfairness} under the current cumulative unfairness $UF_{t-1}$ and the projected instantaneous unfairness induced by the templates. More formally, denote this set of templates as $\bar\Omega^k_t$, such that 
$\bar\Omega^k_t=\{\omega\in\Omega^k: |UF_{t-1} + Exposure_t^\omega(G_A) - \beta Exposure_t^\omega(G_B)|\le\epsilon, \}$.
For each group-level placement template $\omega$ in $\bar\Omega_t^k$, we use it to calibrate the exploration and exploitation at round $t$: the exploration will be confined within a subset of random permutations satisfying $\omega$; while for the exploitation, the fairness will be directly enforced (e.g., when the algorithm's chosen ranking is against the template's requirement, follow the template's).

\begin{algorithm}[t]
	\caption{\model{}-PairRank} 
	\label{alg:model} 
	\begin{algorithmic}[1]
    \STATE  \textbf{Input:} unfairness coefficient $\beta$, unfairness threshold $\epsilon$, L2 coefficient $\lambda$, uncertainty coefficient $\delta$, length of the returned ranked list $k$.
    \STATE Initialize $\mathbf{M}_0 = \lambda \mathbf{I},  {\btheta}_1 = 0, UF_0 = 0.$
	\FOR{$t=1, \dots, T$}
	\STATE $q_t \leftarrow $receive\_query($t$) 
	\STATE $\cX_t = \{\xb_1^t, \cdots, \xb_{L_t}^t\} \leftarrow $ retrieve\_documents($q_t$)
	\STATE $\Omega_t \leftarrow$ retrieve\_group\_placement($k$, $\epsilon$, $\beta$, $UF_{t-1}$)
	\STATE $S_t^c, S_t^u \leftarrow $ construct\_order\_sets($\cX_t, \btheta_{t-1}, \Mb_{t-1}, \delta$)
    \STATE $\pi_t \leftarrow$ FairSwap($S_t^c, S_t^u, \Omega_t$) 
	\STATE $C_t \leftarrow $ collect\_click\_feedback($\pi_t$)
	\STATE $UF_t \leftarrow$ $UF_{t-1}$ + unfairness($\pi_t, C_t$)
	\STATE $D_t \leftarrow $ construct\_training\_set($\pi_t, C_t$)
    \STATE Obtain $\btheta_t$ by minimizing Eq \eqref{eqn:loss}\\
    \STATE $\Mb_{t} = \Mb_{t-1} + \sum_{(i, j)\in D_t} \xb_{ij}{\xb_{ij}}{^\top}$
	\ENDFOR
	\end{algorithmic}
\end{algorithm}

To make our discussion more concrete, we demonstrate how to cast PairRank into a fair OL2R solution with our proposed \model{} framework in Algorithm \ref{alg:model}. The key modification of PairRank lies in step 8 that generates the ranked list with respect to the qualified templates. 
To find the satisfying fair ranking at each round $t$ in PairRank, the most straightforward way is to first generate all ranked lists based on PairRank's original ranking strategy that satisfy the templates in $\bar\Omega^k_t$, and then return the one with minimum expected regret. As mentioned before, the expected regret can be readily computed based on the violation of certain rank order set $S_t^c$ in PairRank \citep{jia2021pairrank}; but the search space is exponentially large: any random permutation of documents in the same block is a valid ranking in PairRank, because the rank orders among those documents are still uncertain and PairRank needs such random permutations to obtain unbiased feedback.  
Hence though valid, such an exhaustive search is prohibitively expensive for online result serving. 

In FairExp-PairRank, we propose to modify the block structure in PairRank to efficiently find the ranking with minimum expected regret under each qualifying template, and then locate the best ranking across all templates in $\bar\Omega^k_t$. 
More specifically, PairRank only requires the relative order among blocks to be followed; while within each block, any permutation of associated documents is allowed. Then for a chosen template, we can segment it according to the size of each block that is already ranked in the descending order, and then calibrate the blocks segment by segment. 
For the segment where we do not have enough documents for a group in the corresponding block, we have to fetch satisfying documents from the lower ranked block(s), which introduces added regret on top of PairRank's original regret. 
But for segments we have enough number of documents in each group, no modification is needed in the corresponding block and thus no added regret. 
Therefore, we just need to guarantee the way we modify the blocks segment by segment will induce the minimum added regret.

We propose FairSwap for this purpose. It takes a group-level placement template and the original partition from PairRank as input, and outputs a qualified ranking and corresponding added regret. 
We use an intuitive example in Figure~\ref{fig:fairswap} to illustrate the design of FairSwap. In this example, PairRank already partitions the five documents into two blocks: $\{\xb_1, \xb_2\}$ and $\{\xb_3, \xb_4, \xb_5\}$, and requires the first block to be ranked above the second. The chosen template $\omega$ requires the ranking to follow $\{A, A, B, A, B\}$. Then, FairSwap first segments $\omega$ into two parts $\{A, A\}$ and $\{B, A, B\}$ based on the size of the two blocks, and examines them one by one in order. 
For the first segment, it requires two documents from $G_A$, but there is only one such document in the first block. Hence FairSwap needs to fetch one from the next block. As PairRank does not specify whether $\xb_3$ is better than $\xb_4$ or vice versus, FairSwap will arbitrarily choose one of them into the first block to avoid any presentation bias. But because the first segment only requires two documents and they must belong to $G_A$, $\xb_2$ has to be moved downwards. To reflect the fact that $\xb_2$ is guaranteed to be better than any documents in the current second block, FairSwap creates a new block only containing $\xb_2$ and inserts it before the next block. Correspondingly, the next segment is also further segmented to accommodate this new block. Then FairSwap moves to the next segments, until new violations emerge between a segment and its corresponding block. Then the same procedure will be applied to calibrate the block structure. Once all the segments are satisfied, an arbitrary ranking will be generated with respect to the new partition just as in PairRank. In this result, because $\xb_4$ is moved up ahead of $\xb_2$ to satisfy $\omega$, added regret is introduced. And this added regret can be precisely computed using the certain rank order set $S_t^c$ in PairRank. For instance, in our example, the added regret will be 2, if the finally output ranking is $\{\xb_4 \succ \xb_1 \succ \xb_2 \succ \xb_3 \succ \xb_5\}$.

\begin{figure}[t]
    \centering
    \includegraphics[width=0.8\linewidth]{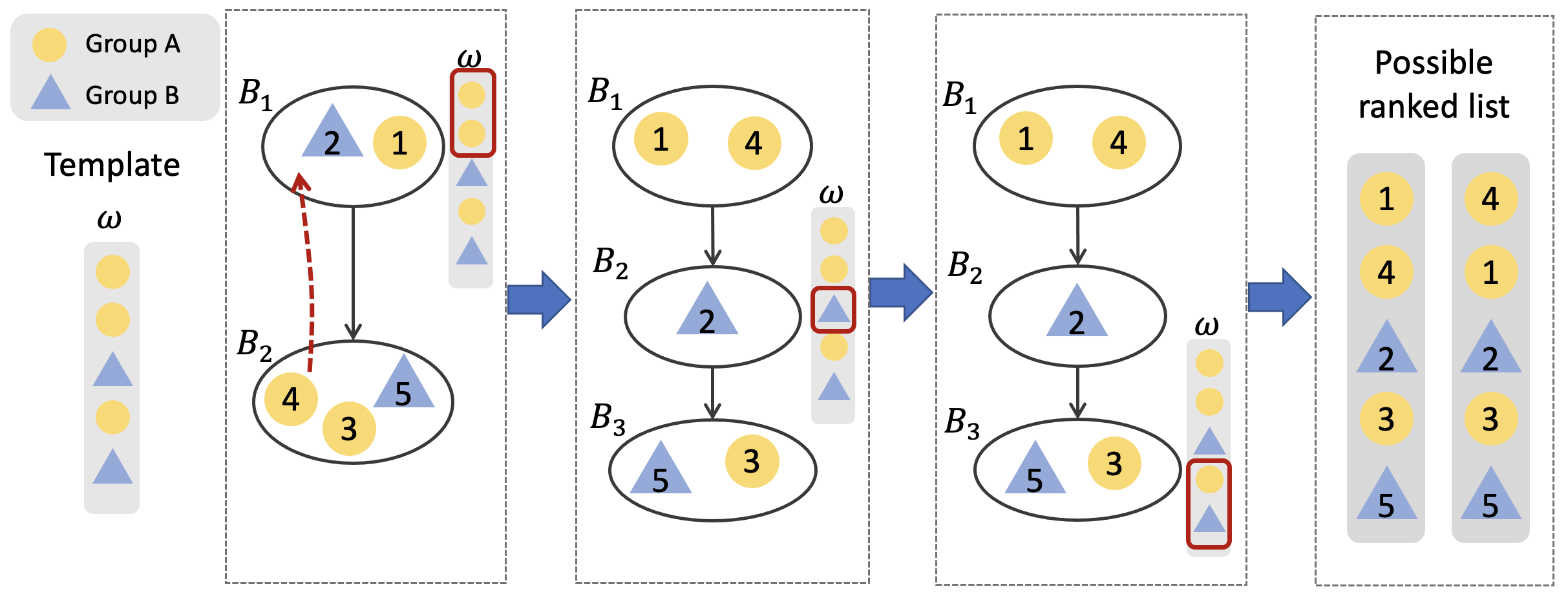}
    \caption{Illustration of FairSwap: calibrate pairwise exploration in PairRank under fairness constraint. It satisfies the required fair placement of groups by minimum number of swaps between blocks generated in PairRank.}
    \label{fig:fairswap}
    \vspace{-3mm}
\end{figure}

The next theorem states FairSwap finds a satisfying ranking for a given template with minimum added regret.
\begin{theorem}
Under the setting of PairRank, to satisfy template $\omega$, FairSwap can generate the minimum added regret.
\end{theorem}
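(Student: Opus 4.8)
The plan is to reduce the claim to a purely combinatorial optimization over rankings and then prove optimality of FairSwap by matching a lower bound with its achieved value. First I would make the objective precise. Since PairRank places every within-block pair into the uncertain set $S_t^u$ and every across-block pair into the certain set $S_t^c$, deviating from PairRank's partition adds regret only when a pair in $S_t^c$ has its (high-probability-correct) relative order broken --- either by reversing the two documents across the new block order, or by merging them into a single block whose internal order is randomized. Hence, for any ranking that realizes the group pattern $\omega$ on the top-$k$ positions, the added regret is determined by the set of certain pairs from $S_t^c$ whose guaranteed order is destroyed, and the theorem reduces to showing FairSwap destroys the minimum-cost such set.

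Next I would establish a lower bound through a prefix-deficit argument. Write the blocks in descending certain order as $B_1 \succ \cdots \succ B_p$ and segment $\omega$ according to the block sizes. For each block boundary and each group, compare the number of slots $\omega$ demands in the prefix up to that boundary against the number supplied by $B_1, \dots, B_i$. Whenever a group's cumulative demand exceeds its cumulative supply, the deficit must be covered by promoting documents of that group from strictly lower blocks, and each promoted document necessarily passes every higher-block document that ends up below it, breaking the corresponding certain pairs. Aggregating these unavoidable crossings over all boundaries yields a lower bound $L(\omega)$ on the number of broken certain pairs of every $\omega$-satisfying ranking.

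Then I would verify FairSwap attains $L(\omega)$. Tracking its top-to-bottom processing, I would prove three invariants: (i) it promotes a document only when the current segment exhibits a genuine deficit, so it never breaks a certain pair gratuitously; (ii) it draws deficient documents from the nearest lower block, minimizing the number of higher-block documents each promotion leapfrogs; and (iii) it isolates each displaced document into a singleton block inserted immediately above the next block, preserving that document's certain order with everything still below it and thereby avoiding all collateral breakage. These invariants imply that the certain pairs FairSwap breaks are exactly the forced crossings counted by $L(\omega)$, so its added regret equals $L(\omega)$; combined with the lower bound, FairSwap is optimal for the given template.

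The main obstacle is the lower bound: ruling out that some globally clever rearrangement satisfies $\omega$ while breaking fewer certain pairs than the sum of local prefix deficits. I expect to close this with an exchange argument --- take any optimal $\omega$-satisfying ranking and normalize it toward FairSwap's canonical partition by repeatedly reordering documents within a block and relocating each displaced document into its own singleton block, showing that every normalization step leaves the number of broken certain pairs unchanged or smaller. Once each optimum can be normalized to FairSwap's form at no extra cost, its count is at least $L(\omega)$ and the sandwich closes. A secondary technical point, visible already in the worked example, is the bookkeeping when one promotion forces a segment to be re-segmented around a newly created singleton block; I would handle this by inducting on the number of remaining segments so that the deficit accounting stays consistent after each local calibration.
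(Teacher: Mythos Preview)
Your plan is sound and, if carried through, would give a more complete argument than the paper's. But it takes a genuinely different route.

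The paper's proof is a short, local greedy calculation. It fixes a single deficient block $B_b$ (assuming all earlier segments are already satisfied), lets $m_i$ be the number of documents drawn from block $B_i$ ($i>b$), writes the added regret as
\[
r^{\text{added}} \le M^2 + \sum_{i=b+1}^{L} m_i \sum_{j=b+1}^{i-1} n_j,
\]
and observes that because $\sum_{j=b+1}^{i-1} n_j$ is nondecreasing in $i$ while $M=\sum_i m_i$ is fixed by $\omega$, loading the $m_i$'s onto the smallest indices---i.e., pulling from the nearest lower block first---minimizes the sum. That is the whole argument; there is no explicit global lower bound $L(\omega)$, no exchange/normalization step, and no separate treatment of the singleton blocks created by displaced documents.

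Your approach differs in that you formalize the objective as ``broken certain pairs,'' build a prefix-deficit lower bound over all block boundaries, and then match it via invariants plus an exchange argument. What this buys you is rigor across multiple interacting deficits: the paper's per-block computation does not discuss why greedily resolving $B_b$ cannot hurt the resolution of a later deficient block (e.g., depleting $B_{b+1}$'s supply of group-$A$ documents), whereas your exchange/normalization step is exactly the tool for that. Conversely, what the paper's argument buys is brevity: the single monotonicity observation on $\sum_{j=b+1}^{i-1} n_j$ is the entire crux, and your invariants (ii) and (iii) essentially re-derive it. If you want to align with the paper, you can collapse your lower-bound-plus-matching scheme into that one-line greedy cost inequality; if you want a proof that actually closes the multi-block gap the paper leaves open, keep your exchange argument.
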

\begin{proof}
At a particular round, assume $L$ blocks are constructed for the documents associated with query $q$ by PairRank. For each block $B_i$, there are $n_i$ documents in $G_B$, $i \in [L]$. Without loss of generality, suppose according to template $\omega$, block $B_b$ needs extra $M$ documents from $G_A$; and all the segments before this has been satisfied. To satisfy this, we need to move
$M = \sum\nolimits_{i=b+1}^L m_i$ documents from lower ranked blocks into $B_b$, where $m_i$ represents the number of documents swapped from block $i$ to block $b$; when no document is needed from block $i$ to move up, $m_i=0$.

Because those promoted documents are known to be worse than those from higher ranked blocks in the original partition, the added regret caused by such swaps can be calculated by the product between the number of promoted documents and the number of those being moved over:
\begin{align*}
    r^{added} \le \sum\nolimits_{i=b+1}^L m_i \times \left(M + \sum\nolimits_{j=b+1}^{i-1} n_j\right) 
    = M^2 + \sum\nolimits_{i=b+1}^{L}m_i \sum\nolimits_{j=b+1}^{i-1} n_j.
\end{align*}
The inequality is because once the documents from different blocks (e.g., $B_i$ and $B_j$ where $i<j$) are moved into one block, the random shuffling in PairRank will disregard their original known order (presumably documents from $B_i$ should still be ranked above those from $B_j$).
$M$ is determined by the required template $\omega$ and the original partition from PairRank after block $B_b$; and hence no swapping can reduce it. In the second term of right-hand side, as $n_j$ is non-zero, $\sum\nolimits_{j=b+1}^{i-1} n_j$ increases with $i$. Therefore, a greedy method for selecting $m_j$ from the closest next block will lead to the minimum added regret in PairRank.
\end{proof}

Since FairSwap is guaranteed to find the satisfying ranking with the minimum added regret for a given template, we can execute it against all templates in $\bar\Omega^k_t$ and return the one with minimum added regret for output. As all templates in $\bar\Omega^k_t$ are qualified to satisfy Eq \eqref{eqn:unfairness}, the final output is also satisfied and has minimum added regret for PairRank.

\subsubsection{Discussions} As the authors suggested in the original paper of PairRank, when generating permutations in each block, certain rank orders in $S^c_t$ can be followed to empirically improve its online ranking quality, though no theoretical justification is provided \citep{jia2021pairrank}. This heuristic also applies when we perform FairSwap: when moving documents from a lower ranked block up, we should choose those who are known to be better than many other documents in the same block. For example, in Figure \ref{fig:fairswap}, if we know $\xb_3\succ\xb_4$ (but not their relations with $\xb_5$), we should move $\xb_3$ to the upper block, rather than $\xb_4$, to void the misordering between $\xb_3$ and $\xb_4$. In our experiments, we empirically verified this strategy is indeed helpful.

So far, all our discussions about \model{} are based on PairRank; but our principle for obtaining fair OL2R can be readily applied to a series of solutions, where the only requirement is their exploration can be explicitly computed (e.g., those confidence interval based solutions \citep{lattimore2018toprank,li2018online,li2016contextual,kveton2015cascading}). The basic idea is clear: at each round $t$, we first construct the set of qualified group-level placement templates $\bar\Omega^k_t$, which are independent from the specific OL2R solutions. Then under each template we find all satisfying rankings from the chosen OL2R solution, and then choose the one with the minimum added regret. For example, in TopRank \citep{lattimore2018toprank}, we can use the template to restrict their confidence interval based topological sort; and in RecurRank \citep{li2018online} the template can be used to confine their exploration in each position. The research question is then how to design an efficient algorithm to take advantage of target algorithm's output structure, such as FairSwap on PairRank's partition structure. This is clearly algorithm-specific, and we leave it as our future work.

\section{Experiment}

In this section, we empirically compared our proposed fair OL2R solution with existing state-of-the-art OL2R and fair OL2R algorithms on two public learning to rank benchmark datasets. 

\noindent{\bf Reproducibility}
Our entire codebase, baselines, analysis, and experiments can be found on
Github~\footnote{https://github.com/yilingjia/FairExp}.

\subsection{Experiment Setup}

\begin{table}[]
\centering
\caption{Group information for two datasets.}
  \vspace{-2mm}
  \label{table:group}
\begin{tabular}{llllll}
\hline
\textbf{Dataset}                 & \textbf{Group}         & \multicolumn{2}{l}{Group A}            & \multicolumn{2}{l}{Group B} \\ \cline{3-6} 
                        & \textbf{attribute}     & Size    & \multicolumn{1}{l|}{Utility} & Size         & Utility      \\ \hline
\textbf{MSLR}                    & PageRank      & 50.57\% & \multicolumn{1}{l|}{0.7302}  & 49.43\%      & 0.6093      \\
\textbf{WEB10K}                  & Inlink number & 64.85\% & \multicolumn{1}{l|}{0.6665}  & 35.15\%      & 0.6776        \\ \hline
\multirow{2}{*}{\textbf{Yahoo!}} & Feature 9     & 49.22\% & \multicolumn{1}{l|}{1.2371}   & 50.78\%      & 0.5893       \\
                        & Feature 471   & 85.75\% & \multicolumn{1}{l|}{1.2384}  & 14.25\%      & 1.2239       \\ \hline
\end{tabular}
\vspace{-2mm}
\end{table}

\subsubsection{Datasets} We experimented on two publicly available learning to rank datasets, Yahoo! Learning to Rank Challenge dataset \citep{chapelle2011yahoo}, which consists of 292,921 queries and 709,877 documents represented by 700 ranking features, and MSLR-WEB10K \citep{qin2013introducing}, which contains 10,000 queries, each having 125 documents on average represented by 136 ranking features. Both datasets are labeled on a five-grade relevance scale: from not relevant (0) to perfectly relevant (4). We followed the train/test/validation split provided in the datasets to perform the cross-validation.

Since no group information about the documents is provided in these two datasets, we manually divided the datasets into two groups. For MSLR-WEB10K dataset, where the detailed information about each ranking feature is provided by the data publisher, we chose the document-specific feature PageRank (feature id 130) and inlink number (feature id 128) as the group attribute. PageRank separates the datatset into two groups with similar size, while inlink number roughly separates the dataset into two groups with similar average utility. But the Yahoo dataset did not provide any specific feature information. To follow a similar split as obtained in MSLR-WEB10K dataset (i.e., size vs., average utility), we checked all the features and chose feature 9 and feature 471 as the group attribute. The detailed group constructions in these two datasets are provided in Table~\ref{table:group}. In addition to studying the trade-off among exploration, exploitation and fairness in OL2R, such choice of group attributes will demonstrate how the intrinsic property of the problem/dataset affects these three elements in fair OL2R.

\subsubsection{User interaction simulation} User clicks were simulated via the standard procedure for OL2R evaluations \citep{oosterhuis2018differentiable, jia2021pairrank}. At each round, a query is uniformly sampled from the training set for result serving. A ranked list will be determined by the model and returned to the user. User clicks are simulated with a dependent click model (DCM)~\citep{guo2009efficient}, which assumes that the user will sequentially scan the list and make click decisions on the examined documents. In DCM, the probabilities of clicking on a given document and stopping the subsequent examination are both conditioned on the document's true relevance label, e.g, $\mathbb{P}(click = 1 | \text{relevance grade})$, and $\mathbb{P}(stop = 1 |click = 1, \text{relevance grade})$.

We employ three different click model configurations to represent three different types of users, for which the details are shown in Table \ref{table:click}. Basically, we have the \textit{perfect} users, who click on all relevant documents and do not stop browsing until the last returned document; the \textit{navigational} users, who are very likely to click on the first highly relevant document and stop there; and the \textit{informational} users, who tend to examine more documents, but sometimes click on irrelevant ones, and thus contribute a significant amount of noise in their click feedback.
To reflect presentation bias, only top $k=10$ ranked results are returned to the users. If there is no click, the user will continue examining the next position. 

\begin{table}[!htbp]
\vspace{-2mm}
  \caption{Configuration of simulated click models.}
  \vspace{-2mm}
  \label{table:click}
  \centering
  
  \begin{tabular}{cccccc|ccccc}
    \hline
                & \multicolumn{5}{c}{Click Probability} & \multicolumn{5}{c}{Stop Probability} \\
R & 0           & 1          & 2   &3 &4        & 0          & 1          & 2   &3 &4       \\ \hline
\textit{per}         & 0.0         & 0.2        & 0.4 &0.8 &1.0        & 0.0        & 0.0        & 0.0   &0.0 &0.0     \\
\textit{nav}    & 0.05   &0.3      & 0.5    &0.7    & 0.95       & 0.2       &0.3 & 0.5    &0.7    & 0.9        \\
\textit{inf}   & 0.4      &0.6   & 0.7   &0.8     & 0.9        & 0.1      &0.2  & 0.3     &0.4   & 0.5        \\ \hline
\end{tabular}
\vspace{-2mm}
\end{table}

\subsubsection{Baselines and settings} We compared the proposed fair framework under PairRank, with several state-of-the-art OL2R solutions, and the fair OL2R model, FairCo. The details are list below.
\begin{itemize}\setlength\itemsep{0.05em}
    \item \textbf{DBGD} \citep{yue2009interactively}: DBGD uniformly samples a direction from the entire model space for exploration and model update.
    \item \textbf{PDGD} \citep{oosterhuis2018differentiable}: 
    PDGD samples the next ranked document from a Plackett-Luce model and estimates gradients from the inferred pairwise preferences.
    \item \textbf{PairRank} \citep{jia2021pairrank}: PairRank directly learns a pairwise model with logistic regression, and explores based on the model's uncertainty with divide-and-conquer.
    \item \textbf{FairCo}\citep{morik2020controlling}: FairCo achieves the group fairness with a proportional control based on the exposure of each group.
\end{itemize}

For each fold, the models are trained with the simulated clicks on the training dataset, and the hyper-parameters are selected based on their offline performance (relevance learning) on the validation set. We used grid search for the best set of hyper-parameters. For PDGD and DBGD, learning rate is selected among $\{10^{-i}\}_{i=1}^3$. For PairRank, we did a grid search for its regularization parameter $\lambda$ over $\{10^{-i}\}_{i=1}^3$ and exploration parameter $\alpha$ over $\{10^{-i}\}_{i=1}^3$. FairCo depends on the pre-defined examination probability of each position to perform inverse propensity scoring. We followed \citep{joachims2017ips} to use randomization to estimate the position-based examination probabilities. The fairness coefficient in FairCo is searched over $\{10^{-i}\}_{i=1}^3$. For FairExp-PairRank, we did the same grid search for the hyper-parameters associated with PairRank. 
To make the fairness control comparable between FairExp-PairRank and FairCo, we set $\beta$ to the ratio of the true utility of each group. For example, for MSLR-WEB10K dataset with inlink as the group attribute, we set $\beta$ to 1. For FairCo, we set the merit of each group based on the ground-truth relevance labels. To calculate exposure, both FairExp-PairRank and FariCo use the independently estimated examination probabilities as the exposure received by each document. With such a setting, both FairCo and FairExp-PairRank focus on the merit-based exposure for group fairness.

\subsubsection{Evaluation} We evaluate the fair OL2R models under three aspects, user satisfaction (i.e., exploitation), relevance learning (i.e., exploration), and unfairness treatments across groups over time. 

For user satisfaction, we evaluate the models' ranking performance during online result serving. In fair OL2R, users' experience will be traded off with the purposes of the feedback collection and the fairness control. We adopt the cumulative NDCG over $T$ rounds to assess models' online performance,
\begin{equation*}
    \text{Cumulative NDCG} = \sum\nolimits_{t=1}^T \text{NDCG@10}(\tau_t) \cdot \gamma^{(t-1)},
\end{equation*}
which computes the expected utility a user receives with probability $\gamma$ that the user stops searching after each query~\citep{oosterhuis2018differentiable, jia2021pairrank}. And in our evaluation, we set $\gamma = 0.9995$. 
For relevance learning, at each round, the newly updated ranker will be evaluated on a hold-out testing set against its ground-truth relevance label. This measures how fast the OL2R model improves its ranking quality, which is measured by NDCG@10 on this hold-out set. Besides, we calculate unfairness according to Eq~\eqref{eqn:unfairness} with a chosen $\beta$ value. 

\begin{figure*}[t]
  \centering
  \begin{subfigure}[b]{\textwidth}
    \centering
    \includegraphics[width=\linewidth]{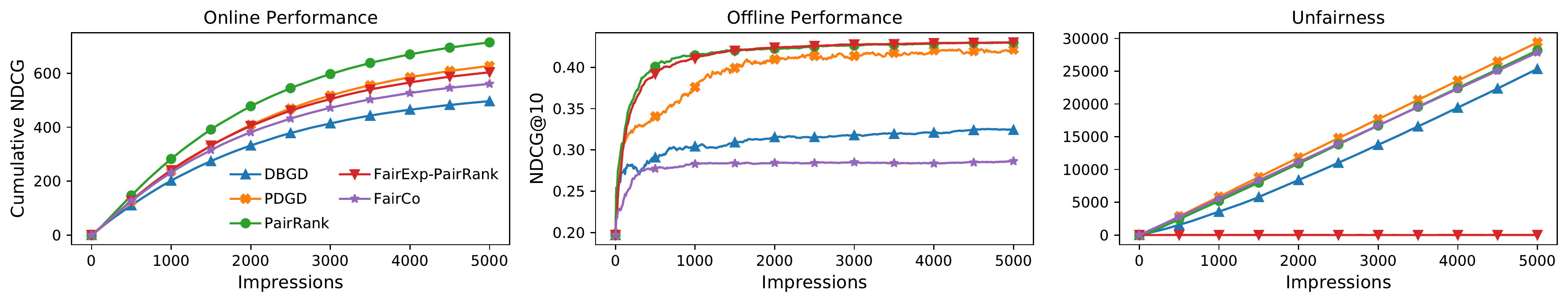}
    \caption{MSLR-WEB10K dataset with group attribute=PageRank score, $\beta=1.22$}
    \label{fig:pagerank_per}
  \end{subfigure}
  \begin{subfigure}[b]{\textwidth}
    \centering
    \includegraphics[width=\linewidth]{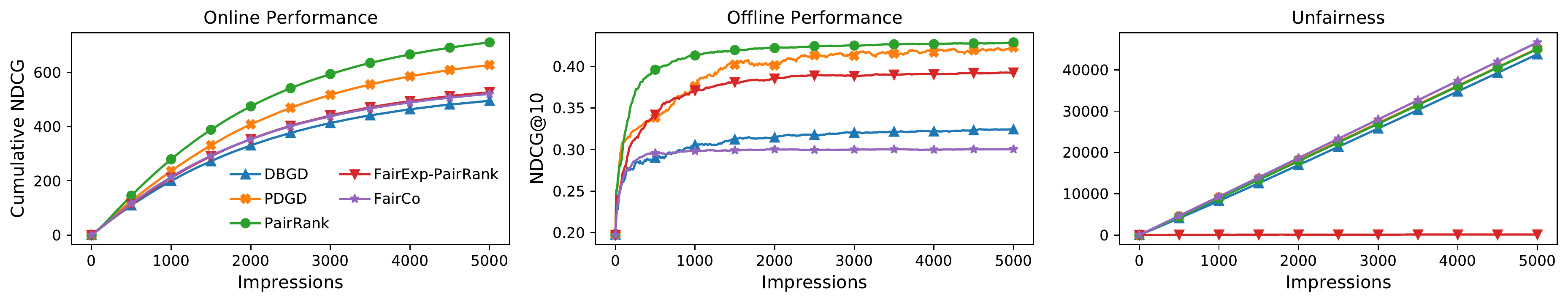}
    \caption{MSLR-WEB10K dataset with group attribute=number of inlinks, $\beta=1.0$}
    \label{fig:inlink_per}
  \end{subfigure}
  \caption{Online, offline and fairness performance for MSLR-WEB10K under perfect click model.}
  \label{fig:MSLR_per}
\end{figure*}

\begin{figure*}[t]
  \centering
  \begin{subfigure}[b]{\textwidth}
    \centering
    \includegraphics[width=\linewidth]{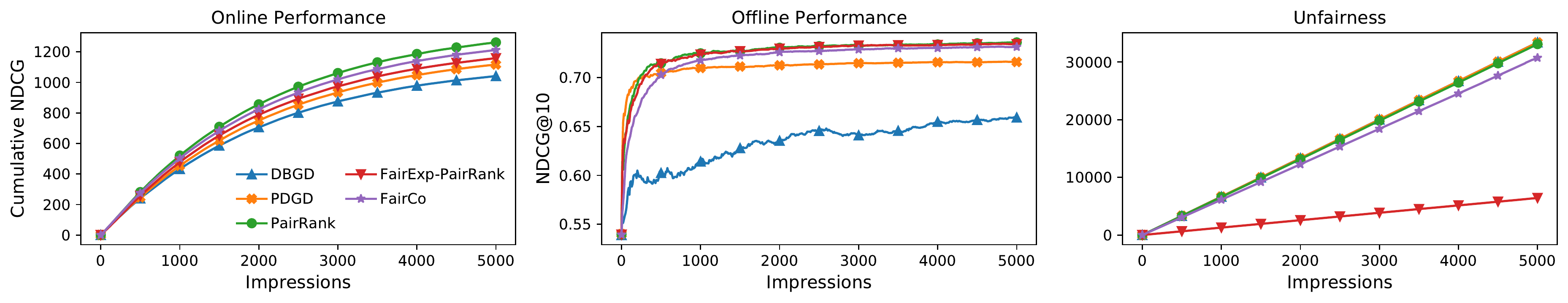}
   \caption{Yahoo dataset with group attribute=feature 9, $\beta=1.0$}
    \label{fig:9_per}
  \end{subfigure}
  \begin{subfigure}[b]{\textwidth}
    \centering
    \includegraphics[width=\linewidth]{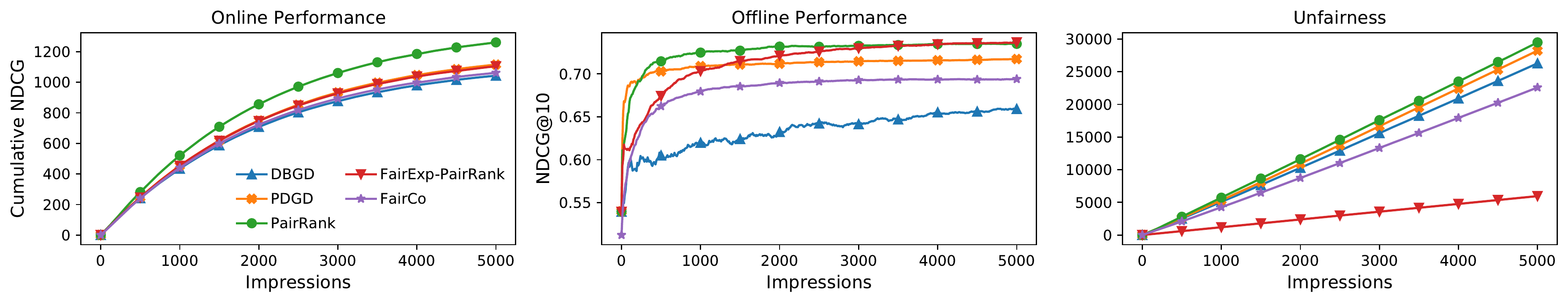}
    \caption{Yahoo dataset with group attribute=feature 471, $\beta=0.477$}
    \label{fig:471_per}
  \end{subfigure}
  \caption{Online, offline and fairness performance for Yahoo dataset under perfect click model.}
  \label{fig:yahoo_per}
\end{figure*}

\begin{figure*}[t]
  \centering
  \begin{subfigure}[b]{\textwidth}
    \centering
    \includegraphics[width=\linewidth]{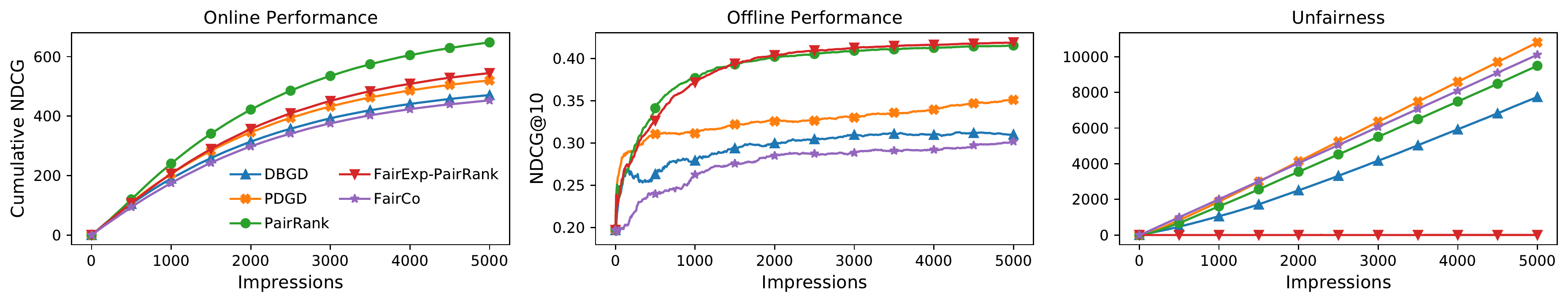}
    \caption{MSLR-WEB10K dataset with group attribute=PageRank score, $\beta=1.22$}
    \label{fig:pagerank_nav}
  \end{subfigure}
  \begin{subfigure}[b]{\textwidth}
    \centering
    \includegraphics[width=\linewidth]{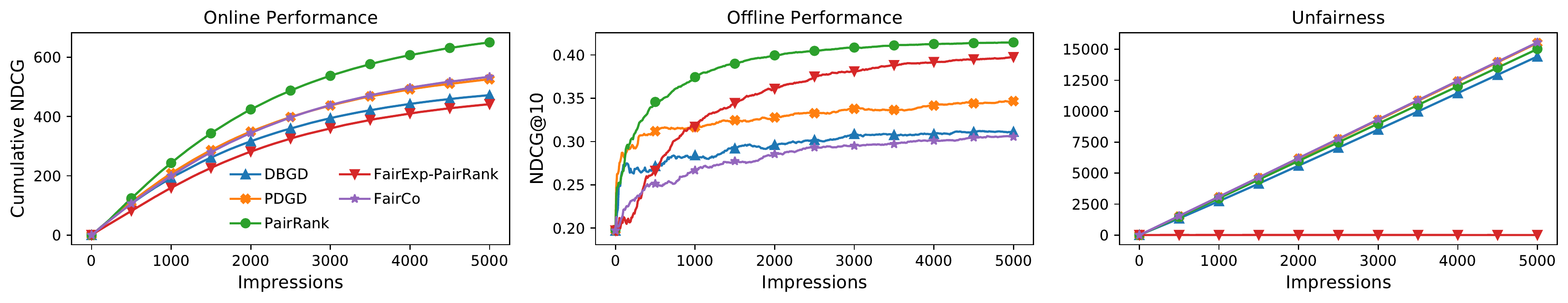}
    \caption{MSLR-WEB10K dataset with group attribute=number of inlinks, $\beta=1.0$}
    \label{fig:inlink_nav}
  \end{subfigure}
  \caption{Online, offline and fairness performance for MSLR-WEB10K under navigational click model.}
  \label{fig:MSLR_nav}
\end{figure*}

\begin{figure*}[t]
  \centering
  \begin{subfigure}[b]{\textwidth}
    \centering
    \includegraphics[width=\linewidth]{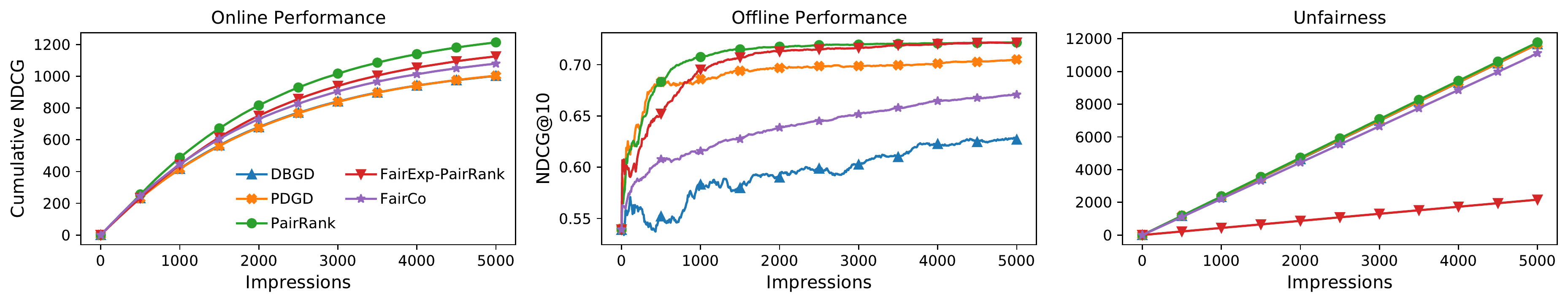}
   \caption{Yahoo dataset with group attribute=feature 9, $\beta=1.0$}
    \label{fig:9_nav}
  \end{subfigure}
  \begin{subfigure}[b]{\textwidth}
    \centering
    \includegraphics[width=\linewidth]{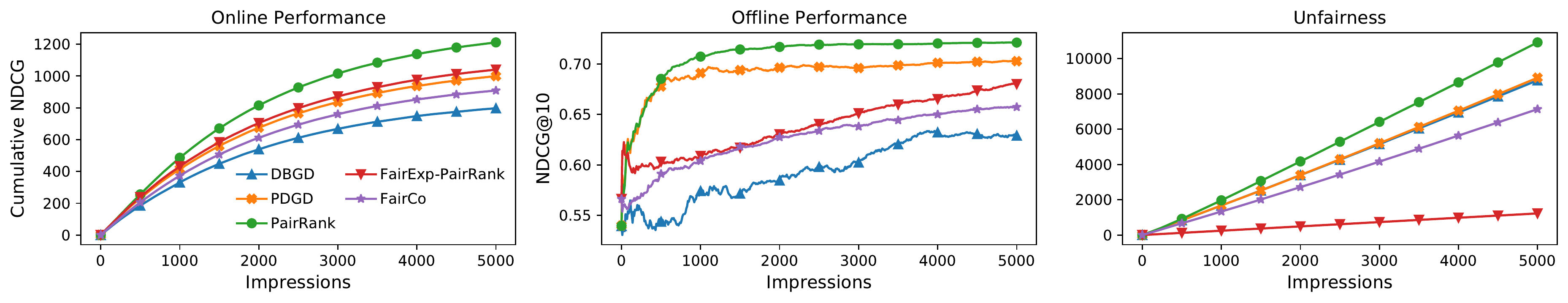}
    \caption{Yahoo dataset with group attribute=feature 471, $\beta=0.477$}
    \label{fig:471_nav}
  \end{subfigure}
  \caption{Online, offline and fairness performance for Yahoo dataset under navigational click model.}
  \label{fig:yahoo_nav}
\end{figure*}

\begin{figure*}[t]
  \centering
  \begin{subfigure}[b]{\textwidth}
    \centering
    \includegraphics[width=\linewidth]{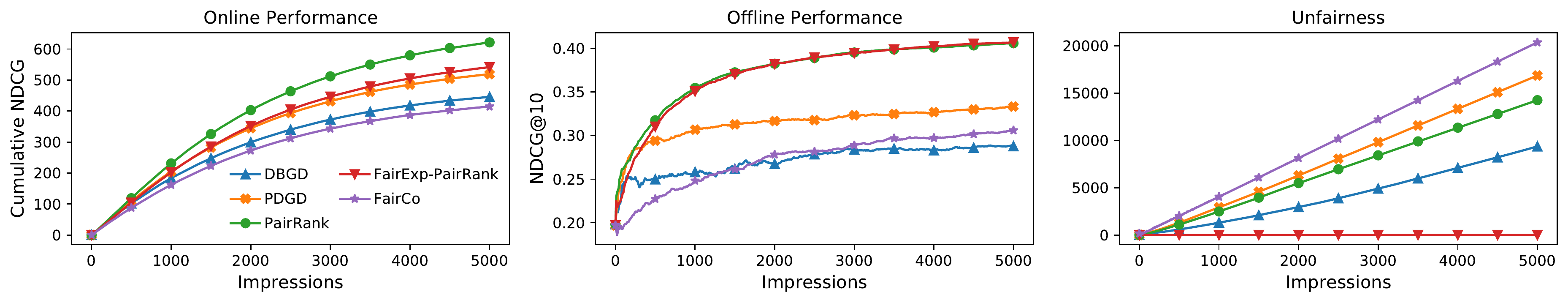}
    \caption{MSLR-WEB10K dataset with group attribute=PageRank score, $\beta=1.22$}
    \label{fig:pagerank_inf}
  \end{subfigure}
  \begin{subfigure}[b]{\textwidth}
    \centering
    \includegraphics[width=\linewidth]{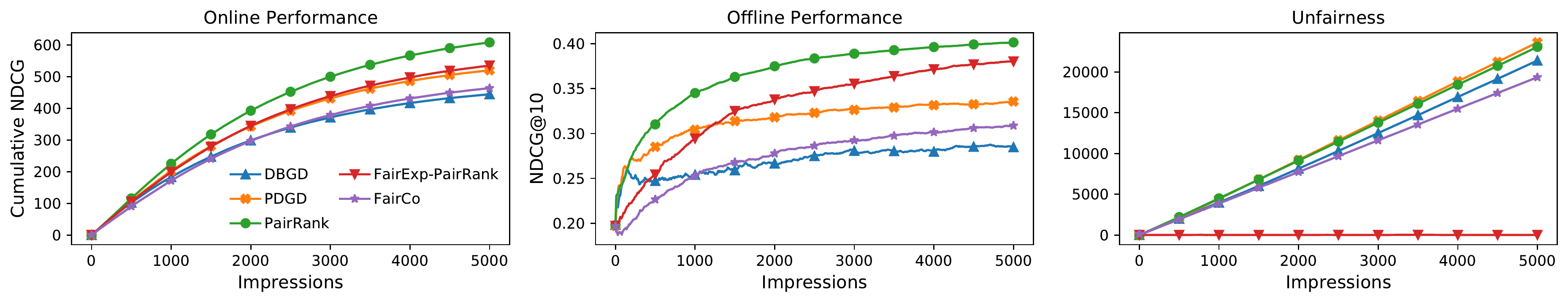}
    \caption{MSLR-WEB10K dataset with group attribute=number of inlinks, $\beta=1.0$}
    \label{fig:inlink_inf}
  \end{subfigure}
  \caption{Online, offline and fairness performance for MSLR-WEB10K under informational click model.}
  \label{fig:MSLR_inf}
\end{figure*}

\begin{figure*}[t]
  \centering
  \begin{subfigure}[b]{\textwidth}
    \centering
    \includegraphics[width=\linewidth]{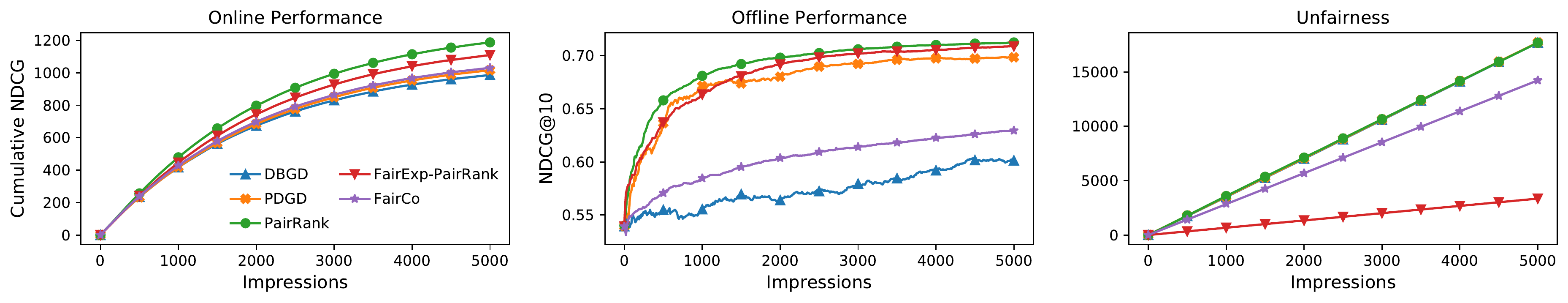}
   \caption{Yahoo dataset with group attribute=feature 9, $\beta=1.0$}
    \label{fig:9_inf}
  \end{subfigure}
  \begin{subfigure}[b]{\textwidth}
    \centering
    \includegraphics[width=\linewidth]{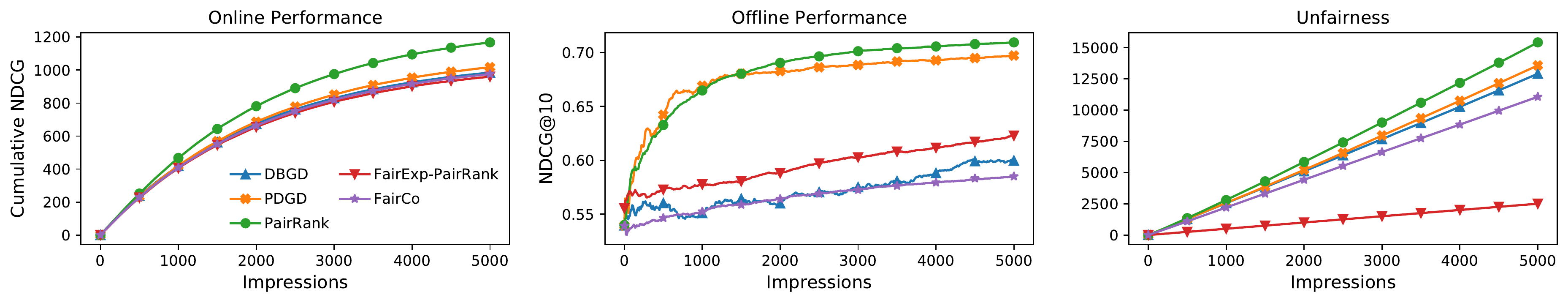}
    \caption{Yahoo dataset with group attribute=feature 471, $\beta=0.477$}
    \label{fig:470_inf}
  \end{subfigure}
  \caption{Online, offline and fairness performance for Yahoo dataset under informational click model.}
  \label{fig:yahoo_inf}
\end{figure*}

\subsection{Experiment Results}

\subsubsection{Exploration, exploitation, and fairness} We report the online performance, offline performance and unfairness on the two datasets with four different group attributes in Figure \ref{fig:MSLR_per} to \ref{fig:yahoo_inf}. As we have three different click models and two datasets, and under each datasets we have two different ways to define groups, we will have 12 different combinations for evaluations. To facilitate our discussion, we organize the results first by click models, then by datasets, and finally by the group definition. This gives us a more organized picture of all our comparisons.

First, we can clearly observe that PairRank consistently showed the best online performance across all the settings. And due to the fairness control, FairExp-PairRank encounters reduced online performance. This is the expected trade-off between utility and fairness: 
to allocate fair exposure for different groups, documents with low-quality have to be swapped to the top positions. Fortunately, due to our FairSwap strategy that guarantees minimum added regret and the already good performance of PairRank, the cost of such trade-off is under control, such that FairExp-PairRank even outperformed most of other OL2R solutions that are not subject to fairness constraints. On the other hand, FairCo showed worse online performance than FairExp-PairRank and the other OL2R solutions. We attribute this to FairCo's ``overreacting'' fairness control. In FairCo, to reduce the current unfairness, the ranking scores of documents belonging to the underrepresented group will all be promoted by the same value; but the differences of their relevance quality are ignored. As a result, documents with low-quality will be indifferently promoted for fairness, which directly results in its bad online performance.

Offline performance indicates the models' convergence for relevance learning. We can observe that the convergence rate of FairExp-PairRank is slower than that in PairRank. This is consistent with our previous discussion that due to the fairness control some documents originally selected for exploration cannot be displayed, which directly slows down the improvement of relevance estimation. 
Besides, we can observe that such control shows different impact across different group settings. For example, there is almost no impact on the offline performance of FairExp-PairRank on MSLR-WEB10K dataset when choosing PageRank feature to define groups, while on Yahoo dataset with feature 471 as the group attribute, the drop is significant. We checked the detailed output in FairExp-PairRank and found that in the former case, the fairness constraint can be largely satisfied with the original rankings in PairRank. Therefore, little calibration is added. However, in the latter case the underexplored documents always got demoted by fairness control (because the group size differs significantly), it seriously slowed down relevance learning. For example, on Yahoo dataset, when choosing the feature 471 to define the groups, one group has almost 5 times more documents than another group. As a result, FairExp-PairRank has to repeatedly promote the minority for fairness, i.e., restrain the documents belonging to the majority group from the top positions.  

On the other hand, thanks to our template-guided fairness control, FairExp-PairRank showed significant advantages in its cumulative unfairness compared to all baselines across all settings, as fairness is handled as a hard constraint in it. 
It is worth noting that on the Yahoo dataset, even with FairExp-PairRank, the cumulative unfairness is increasing. This is because there are many queries with highly imbalanced candidate documents from the two groups such that no ranking can be generated with respect to the fair templates in $\bar\Omega$. In FairExp, when no qualified template can be satisfied, we will first choose a ranking with the minimum unfairness, and then compare their added regret. And this observation also reflects the intrinsic incompatibility of an environment in fair ranking problems: the distribution of queries and/or their associated documents is out of the control of an algorithm. Compared to  FairExp-PairRank that treats unfairness as a hard constraint, FairCo shows much higher unfairness during the course of OL2R. And this is still resulted from its ``overreacting'' behavior in the proportional control: it aims to fix all the previous unfair treatments in the current query, which might over compensates the current underrepresented group and lead to oscillation in the subsequent controls. 

In addition, as we mentioned before, even when pure random exploration is employed in OL2R, e.g., in DBGD, ranking fairness is not automatically satisfied. Especially when the optimal relevance-driven ranking is unfair, the faster an OL2R converges, the more unfair its ranked results will be.

\begin{figure*}[t]
  \centering
  \begin{subfigure}[b]{\textwidth}
    \centering
    \includegraphics[width=\linewidth]{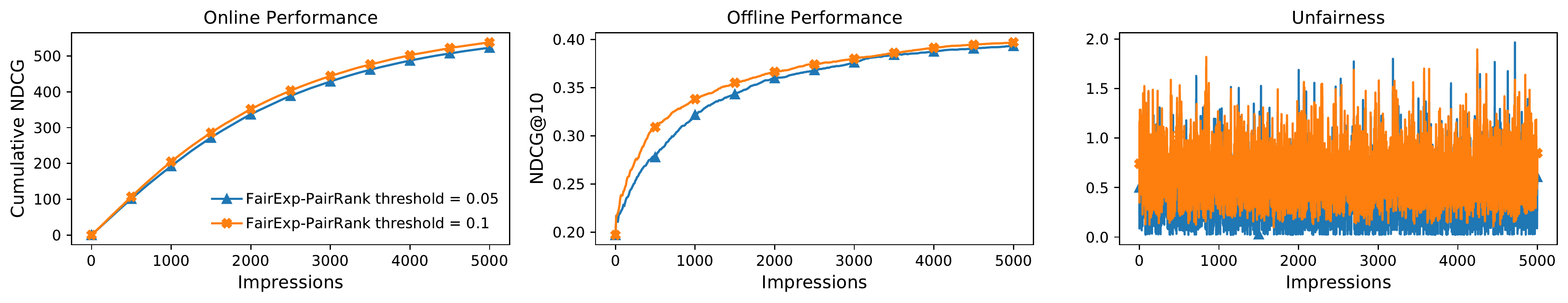}
     \caption{MSLR-WEB10K dataset with group attribute=PageRank score, $\beta=1.22$}
    \label{fig:controlpagerank}
  \end{subfigure}
  \begin{subfigure}[b]{\textwidth}
    \centering
    \includegraphics[width=\linewidth]{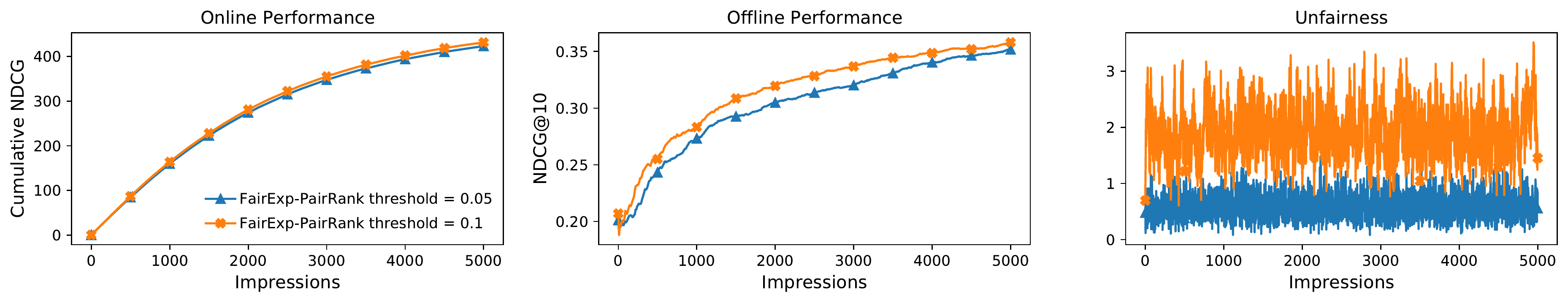}
    \caption{MSLR-WEB10K dataset with group attribute=number of inlinks, $\beta=1.0$}
    \label{fig:controlinlink}
  \end{subfigure}
\caption{Effect of fairness control on online and offline performance for MSLR-WEB10K dataset under informational click model.}
  \label{fig:web10kcontrol}
\end{figure*}

\begin{figure*}[t]
  \centering
  \begin{subfigure}[b]{\textwidth}
    \centering
    \includegraphics[width=\linewidth]{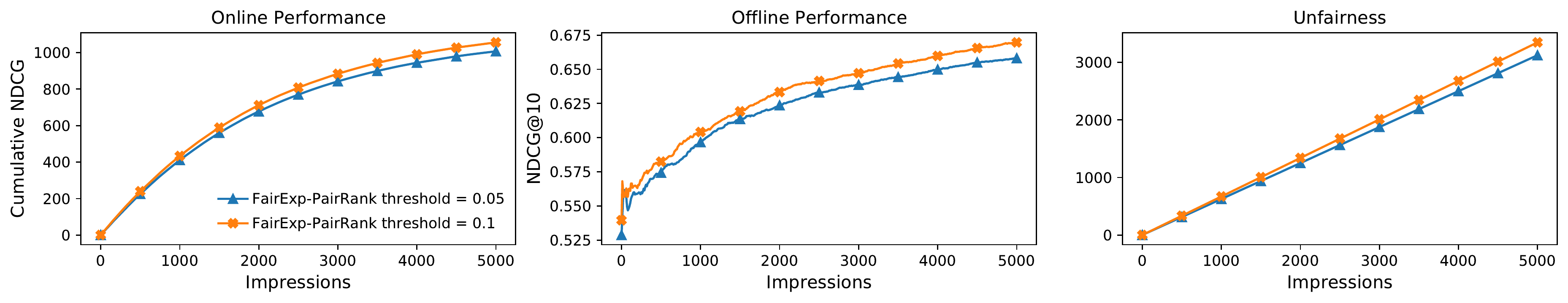}
     \caption{Yahoo dataset with group attribute=feature 9, $\beta=1.0$}
    \label{fig:control9}
  \end{subfigure}
  \begin{subfigure}[b]{\textwidth}
    \centering
    \includegraphics[width=\linewidth]{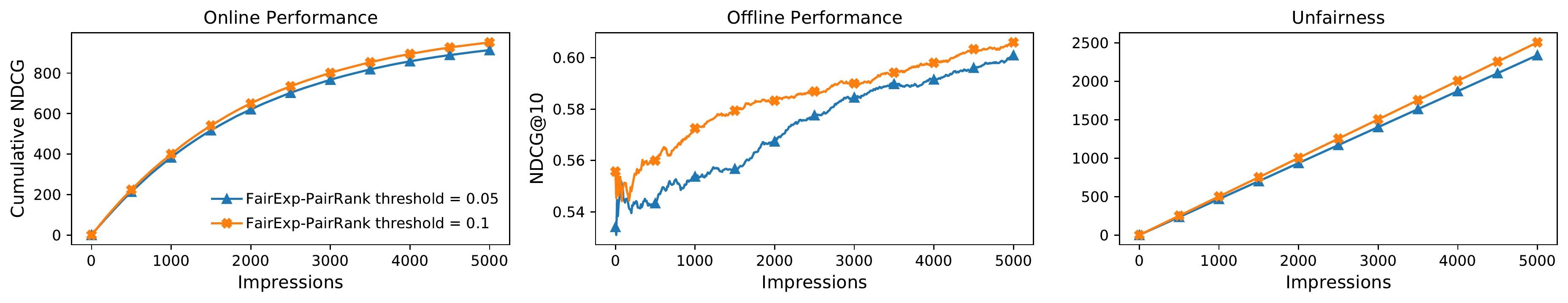}
    \caption{Yahoo dataset with group attribute=feature 470, $\beta=0.477$}
    \label{fig:control471}
  \end{subfigure}
\caption{Effect of fairness control on online and offline performance for Yahoo dataset under informational click model.}
  \label{fig:yahoocontrol}
\end{figure*}

\subsubsection{Trade-off between relevance learning and fairness control.}
In Figure~\ref{fig:web10kcontrol} and Figure~\ref{fig:yahoocontrol}, we demonstrate the trade-off between the relevance learning and fairness during the course of OL2R under different levels of tolerance for unfairness. In these figures, we compared the online ranking, offline ranking, and unfairness performance on FairExp-PairRank model with different thresholds $\epsilon$ on the unfairness. Similar patterns were observed under different click models and thus we only report the results obtained under the informational click models.

We can clearly observe that with a smaller threshold (i.e., stronger constraint on the fairness), FairExp-PairRank had worse online and offline performance compared to a larger threshold. This is expected as with a smaller threshold, more calibrations are needed on top of the original partition generated by PairRank in each round. In other words, the set $\bar\Omega_t$ got smaller with $\epsilon=0.05$ than that under $\epsilon=0.1$; and therefore the finally presented ranking would deviate more from the desired ranking from PairRank. On the other hand, the model will result in larger unfairness with looser control. This is because as long as the fairness constraint is satisfied, \model{} will return the ranked list with the minimum number of violations on the certain rank orders. Besides, as we discussed before, due to the intrinsic incompatibility of the environment, e.g., some queries have very uneven number of candidate documents between groups, the unfairness may be larger than the threshold we set, which is inevitable. For example, in Figure~\ref{fig:web10kcontrol}, for MSLR-Web10K dataset, the cumulative unfairness may reach around 3 in some rounds, though the threshold we set is at most 0.1. And for Yahoo dataset, shown in Figure~\ref{fig:yahoocontrol}, the queries have more imbalanced candidate documents and no ranking can be generated wit respect to the fair templates, where the model will choose a ranking with minimum unfairness. In this circumstance, even if the algorithm completely ignore relevance, it will still be unable to enforce fairness. For example, as shown in Figure \ref{fig:yahoo_inf}, the other algorithms' unfairness, including FairCo, is three times larger than FairExp-PairRank's. In general, FairExp-PairRank will maximize the allowed unfairness to trade for faster relevance estimation and better online result ranking, which further suggests the validity of our FairSwap design to introduce minimum distortion in the original PairRank's online training.

\subsubsection{Zoom into FairExp-PairRank} To further verify the effect of fairness control on the trade-off between exploration and exploitation in PairRank, we zoom into the trace of the number of violations on the certain rank orders during online interactions in Figure~\ref{fig:nv}. We can observe that the fairness control inevitably leads to constant violations on the identified certain rank orders (i.e., added regret) and the impact varies across different group configurations. This also shows another intrinsic incompatibility in fair result ranking: when the optimal utility-driven ranking is not fair, the system has to trade utility for fairness. And this trade-off is beyond the system's control, as it is affected by the distribution of queries that are issued by the users and the associated documents that are produced by the content providers.

\begin{figure}[t]
    \centering
    \includegraphics[width=0.8\linewidth]{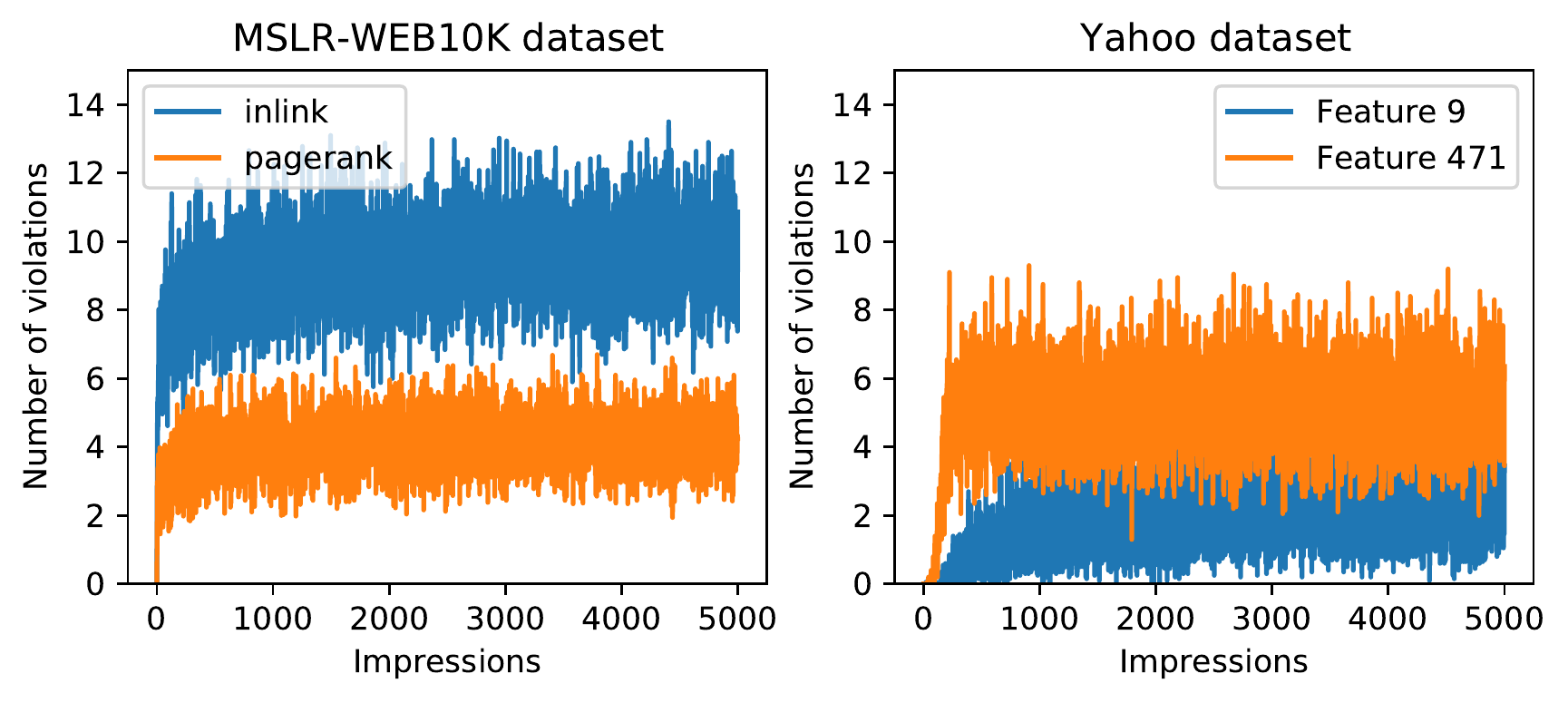}
    \vspace{-5mm}
    \caption{number of violations on certain rank orders.}
    \label{fig:nv}
\end{figure}

\section{Conclusion}
Existing OL2R solutions focus on optimizing user-oriented utilities of the ranked list, but ignore the potential unfair treatment that result ranking can introduce on the content providers. Restricted by the prerequisite on the availability of relevance labels, existing fair ranking solutions for offline  models cannot be applied to the online setting. In this work, we propose a general framework FairExp for fair OL2R. During the interaction with the users, the trade-off between exploration and exploitation in OL2R is calibrated with respect to the fairness constraint. By taking advantage of the structure of deterministic exploration in PairRank, fairness control, relevance learning and online quality can be best optimized simultaneously. We proved that our strategy intrudces the minimum distortion in OL2R's regret to obtain fairness. And our empirical evaluations demonstrate the strong advantage of FairExp in balancing exploitation, exploration and fairness in OL2R.

Currently we focus on OL2R solutions that are based on deterministic exploration strategies, e.g., PairRank is based on construction of confidence intervals, it is interesting to extend \model{} to enforce fairness constraints on those stochastic strategies, e.g., sampling based OL2R solutions. Moreover, the application of \model{} is beyond fair OL2R; and it can be generally applied to handle multi-objective optimization in result rankings, e.g., relevance and fairness can be considered as two distinct objectives for OL2R. Other objectives, such as diversity and trustworthiness, can also be enforced using \model{}. 

\begin{acks}
This paper is based upon the work supported by the National Science Foundation under grant IIS-1553568 and IIS-2128019, and Google Faculty Research Award.
\end{acks}

\bibliography{sample-base}
\end{document}